\DeclareMathOperator{\tr}{tr}
\theoremstyle{plain}
\newtheorem{thm}{Theorem}[section]
\newtheorem{cor}[thm]{Corollary}
\newtheorem{prop}[thm]{Proposition}
\newtheorem{rem}[thm]{Remark}
\theoremstyle{definition}
\theoremstyle{remark}
\numberwithin{equation}{section}
\newcommand{\keywords}{\textbf{Key words and phrases: }\medskip}
\newcommand{\subjclass}{\textbf{Math. Subj. Clas.: }\medskip}
\begin{document}
\title{\textbf{2D  discrete Yang-Mills equations on the torus} }
\author{\textbf{Volodymyr Sushch} \\
{ \em Koszalin University of Technology} \\
 {\em Sniadeckich 2, 75-453 Koszalin, Poland} \\
 { \em volodymyr.sushch@tu.koszalin.pl} }

\date{}
\maketitle
\begin{abstract}
In this paper, we introduce a discretization scheme for the Yang-Mills equations in the two-dimensional case using a framework based on discrete exterior calculus. Within this framework, we define discrete versions of the exterior covariant derivative operator and its adjoint, which capture essential geometric features similar to their continuous counterparts. Our focus is on discrete models defined on a combinatorial torus, where the discrete Yang-Mills equations are presented in the form of both a system of difference equations and a matrix form.
\end{abstract}

\keywords{Yang-Mills equations, discrete exterior calculus, discrete operators, combinatorial torus,
difference equations}

 \subjclass  {39A12, 39A70,  81T13}

 \section{Introduction}

 Discrete models of mathematical physics problems that preserve a geometric structure are crucial for the successful numerical simulation of differential equations.
 In \cite{S6}, a discrete version of the de Rham-Hodge theory in the specific case of a two-dimensional torus has been discussed in detail. In this paper, motivated by this and considering the Yang-Mills theory as nonlinear generalization of Hodge theory, we construct a discrete counterpart of Yang-Mills equations in the two-dimensional case. Our discretization scheme is based on the geometric approach proposed by Dezin in \cite{Dezin}.
 This approach is characterized by the construction of  a discrete exterior calculus on a complex of real-valued cochains. In author's previous papers \cite{S1, S2, S3, S4, S5}, this technique has been developed on  matrix-valued complexes of cochains, and discrete analogues of the Yang-Mills equations in $\mathbb{R}^n$ and in Minkowski space have been studied. In \cite{S1, S3}, we  focused on the construction of gauge invariant discrete models. Depending on the definition of a discrete version of the Hodge star, we discuss two types of discrete Yang-Mills equations. In \cite{S4, S5}, we study discrete analogues of instanton and anti-instanton solutions.

There are various approaches to discretizing Yang-Mills theories, with numerous papers dedicated to this subject. For instance, see \cite{Catterall1, Catterall2, Joseph, KSSMKI, HK} and the references therein. Many of these approaches rely on lattice discretization schemes. However, in the lattice formulation, maintaining the geometrical properties of the original gauge theory poses challenges. Two dimensional Yang-Mills theories have been investigated from several points of view in \cite{AB,  AKS, DGHK, GKS, Witten}.

Recently, from a numerical simulation perspective, interest in the discretization of the Yang-Mills equations has been fueled by the development of methods that preserve the geometric and topological structures of their continuum counterparts. Constraint-preserving numerical methods for the Yang-Mills equations, based on the finite element exterior calculus \cite{Arnold}, have been developed and investigated in \cite{CW, CH, BKS}. In \cite{DOQ}, within framework of a discrete de Rham method,  another structure-preserving numerical scheme to approximate the Yang-Mills equations has been constructed.

Let's provide brief reminders of the
main definitions of smooth Yang-Mills
theory.
Although our focus in this paper is on the two dimensional torus,
we will describe the Yang–Mills action in the more general context of compact Riemannian
manifolds of arbitrary dimension.
Let $M$ be a compact connected Riemannian manifold. For simplicity, we  assume that a principal $G$-bundle $P$ over $M$ is trivial and that the structure  group $G$ is the Lie group $SU(2)$. A connection on $P$ is defined as the $su(2)$-valued differential 1-form $A$, where $su(2)$ is the Lie algebra of the Lie group $SU(2)$. The curvature 2-form $F$ of the connection $A$ is defined as follows
\begin{equation}\label{1.1}
 F=dA+A\wedge A,
 \end{equation}
where $d$ is the exterior derivative and $\wedge$ is the exterior product.
The exterior covariant derivative $d_A$ of the connection $A$ acts on an $su(2)$-valued  $r$-form $\Omega$ as
 \begin{equation}\label{1.2}
d_A\Omega=d\Omega+A\wedge\Omega+(-1)^{r+1}\Omega\wedge A.
\end{equation}
Denote by $\delta_A$ the formal adjoint operator of $d_A$ with respect to the $L^2$-norm on the sections of $P$. It is given by
$\delta_A=\pm\ast d_A\ast$, where $\ast$ is the Hodge star. See  \cite{AB} for details.
The Yang-Mills equations in terms of differential forms can be represented as
\begin{equation}\label{1.3}
  \delta_A F=0,
 \end{equation}
or, equivalently, as
\begin{equation}\label{1.4}
 d_A\ast F=0.
 \end{equation}
 Equations \eqref{1.3} or \eqref{1.4} form a system of second order nonlinear differential equations. A solution $A$ of \eqref{1.3} or \eqref{1.4} is referred to as the Yang-Mills connection. Furthermore, by virtue of the Bianchi identity $d_A F=0$,  the Yang-Mills connection satisfies the following Laplace-type equation
 \begin{equation}\label{1.5}
  (d_A\delta_A+\delta_A d_A) F=0.
 \end{equation}
 This explains why the Yang-Mills theory can be interpreted as nonlinear generalization of  Hodge theory. See, for example, \cite{FU} or \cite{AB} for details.

 Our purpose is to propose a geometric structure-preserving discrete model of the Yang-Mills equations on a two-dimensional torus.
 The technique developed in \cite{S6} to describe a discrete analogue of the de Rham-Hodge theory also allows us to construct and investigate discrete analogues of Equations \eqref{1.3} and \eqref{1.4} on a combinatorial torus.
 We show that discrete analogues of the operators $d_A$ and $\delta_A$ exhibit properties similar to those in the continuous case. Furthermore, we present a system of difference equations explicitly, which are equivalent to discrete Yang-Mills equations.

The paper is organized as follows. In Section 2, we briefly review the concept of a combinatorial model of $\mathbb{R}^2$, revisiting relevant material from \cite{S6}. We introduce a discrete version of matrix-valued differential forms (a complex of $su(2)$-valued cochains) and define analogues of the main exterior calculus operations on them. Section 3 establishes a discrete analogue of the Yang-Mills equations in the two-dimensional case. Additionally, we discuss the properties of discrete counterparts of the covariant exterior derivative and its adjoint operator. In Section 4, we apply the proposed discretization scheme to a combinatorial torus. We present the discrete Yang-Mills equations in terms of a system of difference equations and also provide the matrix form of this system.

  \section{Background on a discrete model}
  The construction of a combinatorial model of the two-dimensional Euclidean space
 $\mathbb{R}^2$ is presented in detail in \cite{S6}. We will only recall the main definitions and key operations that are most relevant to this paper.
  A combinatorial model of  $\mathbb{R}^2$
   consists of  a two-dimensional chain complex $C(2)=C_0(2)\oplus C_1(2)\oplus C_2(2)$ generated by the 0-, 1-, and 2-dimensional basis elements $\{x_{k,s}\}$,
$\{e_{k,s}^1,  e_{k,s}^2\}$, and $\{V_{k,s}\}$, respectively, where $k,  s \in {\mathbb Z}$.  The boundary operator $\partial: C_r(2)\rightarrow C_{r-1}(2)$
 is defined by
\begin{align}\label{2.1}
\partial x_{k,s}=0, \qquad  \partial e_{k,s}^1=x_{\tau k,s}-x_{k,s} \qquad  \partial e_{k,s}^2=x_{k, \tau s}-x_{k,s},\nonumber \\
\partial V_{k,s}=e_{k,s}^1+e_{\tau k,s}^2-e_{k, \tau s}^1-e_{k,s}^2,
\end{align}
where $\tau$ denotes the shift by one to the right, i.e., $\tau k=k+1$.
The definition \eqref{2.1} is extended to arbitrary chains by linearity.

Let us consider an object dual to the chain complex $C(2)$. The dual complex $K(2)=K^0(2)\oplus K^1(2)\oplus K^2(2)$ is defined as in \cite{S6}. It has a  structure similar to $C(2)$ and consists of cochains  with coefficients belonging to the algebra $su(2)$. For the construction of
$su(2)$-valued cochains see also \cite{S1}. Denote by $\{x^{k,s}\}$,  \,
$\{e^{k,s}_1, \ e^{k,s}_2\}$, and $\{V^{k,s}\}$ the basis elements of $K^0(2)$, $K^1(2)$, and $K^2(2)$, respectively. Then the cochains
 $\Phi\in K^0(2)$,  $\Omega\in K^1(2)$, and $\Psi\in K^2(2)$ can be written as

  \begin{equation}\label{2.2}
  \Phi=\sum_{k,s}\Phi_{k,s}x^{k,s}, \quad
  \Omega=\sum_{k,s}(\Omega^1_{k,s}e_1^{k,s}+\Omega^2_{k,s}e_2^{k,s}), \quad \Psi=\sum_{k,s}\Psi_{k,s}V^{k,s},
\end{equation}
where  $\Phi_{k,s}, \Omega^1_{k,s}, \Omega^2_{k,s}, \Psi_{k,s}\in su(2)$ for any $k, s \in {\mathbb Z}$.
 As in \cite{S6}, we will call cochains forms (discrete forms).
 Note the difference between $K(2)$ here and the cochain complex in \cite{S6}. The complex $K(2)$ consists of $su(2)$-valued cochains while in \cite{S6} cochains are real-valued.

For discrete forms \eqref{2.2}, the pairing is defined with the basis elements of $C(2)$ according to the rule
\begin{equation}\label{2.3}
\langle x_{k,s}, \ \Phi\rangle=\Phi_{k,s}, \ \langle e_{k,s}^1, \ \Omega\rangle=\Omega^1_{k,s}, \
\langle e_{k,s}^2, \  \Omega\rangle=\Omega^2_{k,s},  \ \langle V_{k,s}, \Psi \rangle=\Psi_{k,s}.
\end{equation}

As in \cite{S6}, we consider the coboundary operator $d^c: K^r(2)\rightarrow K^{r+1}(2)$.
The coboundary operator can be interpreted as a discrete analogue of the exterior differential. For the forms \eqref{2.2}, we have
\begin{equation}\label{2.4}
d^c\Phi=\sum_{k,s}(\Phi_{\tau k,s}-\Phi_{k,s})e_1^{k,s}+(\Phi_{k,\tau s}-\Phi_{k,s})e_2^{k,s},
\end{equation}
\begin{equation}\label{2.5}
d^c\Omega=\sum_{k,s}(\Omega_{\tau k,s}^2-\Omega_{k,s}^2-\Omega_{k,\tau s}^1+\Omega_{k,s}^1)V^{k,s},
\end{equation}
and $d^c\Psi=0$.

 Finally, let's recall the definitions of the $\cup$ product and the star operator on $K(2)$, following \cite{S6}.
 For the basis elements of $K(2)$, the $\cup$ product is defined as follows
\begin{equation*}\label{}
x^{k,s}\cup x^{k,s}=x^{k,s}, \qquad x^{k,s}\cup e^{k,s}_1=e^{k,s}_1, \qquad x^{k,s}\cup e^{k,s}_2=e^{k,s}_2,
\end{equation*}
\begin{equation*}\label{}
x^{k,s}\cup V^{k,s}=V^{k,s}, \qquad e^{k,s}_1\cup x^{\tau k,s}=e^{k,s}_1, \qquad e^{k,s}_2\cup x^{k, \tau s}=e^{k,s}_2,
\end{equation*}
\begin{equation*}\label{}
V^{k,s}\cup x^{\tau k,\tau s}=V^{k,s}, \qquad e^{k,s}_1\cup e^{\tau k,s}_2=V^{k,s}, \qquad e^{k,s}_2\cup e^{k, \tau s}_1=-V^{k,s},
\end{equation*}
with the product being zero in all other cases. This operation is extended to arbitrary forms by linearity, where the form coefficients are multiplied as matrices. In \cite[Ch.3, Proposition 2]{Dezin},  it is shown that for real-valued discrete forms, the discrete counterpart of the Leibniz rule  is valid. For any matrix-valued forms, this rule holds as well, and we have
\begin{equation}\label{2.6}
 d^c(\Omega\cup\Phi)=d^c\Omega\cup\Phi+(-1)^r\Omega\cup
d^c\Phi,
\end{equation}
where $r$ is the degree of $\Omega$.
The star operator $\ast: K^r(2)\rightarrow  K^{2-r}(2)$ is defined  by the rule
\begin{equation}\label{2.7}
\ast x^{k,s}=V^{k,s}, \quad \ast e^{k,s}_1=e^{\tau k,s}_2, \quad \ast e^{k,s}_2=-e^{k,\tau s}_1, \quad \ast V^{k,s}=x^{\tau k, \tau s}.
\end{equation}
Again, it's  extended to arbitrary forms by linearity. The operation $\ast$ has properties similar to the Hodge star operator.  Thus, it can be regarded as a discrete analogue of the Hodge star operator.

Let's now shift our focus to the inner product on $K(2)$.
We'll start by recalling that the real Lie algebra $su(2)$ has a basis given by
 \begin{equation*}
E_1=
\begin{bmatrix}
i & 0  \\
0 & -i
\end{bmatrix}, \quad
E_2=
\begin{bmatrix}
0 & 1  \\
-1 & 0
\end{bmatrix}, \quad
E_3=
\begin{bmatrix}
0 & i  \\
i & 0
\end{bmatrix},
\end{equation*}
where $i$ is the imaginary  unit.
Any matrix $A\in su(2)$ can be expressed as
\begin{equation}\label{2.8}
A=a_1E_1+a_2E_2+a_3E_3=
\begin{bmatrix}
a_1i & a_2+a_3i  \\
-a_2+a_3i & -a_1i
\end{bmatrix},
\end{equation}
 where $a_1, a_2, a_3\in\mathbb{R}$.
 The algebra $su(2)$ is endowed with an inner product defined as
 \begin{equation*}
(A,  B)=-\tr(AB).
 \end{equation*}
 Here, $\tr$ denotes the trace of the matrix product.
 It's easy to show that this forms a real symmetric bilinear form, and in fact, it's positive definite.

Let's denote by $V$ the two-dimensional finite chain with unit coefficients, defined by
\begin{equation}\label{2.9}
  V=\sum_{k=1}^N\sum_{s=1}^MV_{k,s}.
\end{equation}
Now, we define the inner product of discrete forms on $V$ as follows
 \begin{equation}\label{2.10}
 (\Phi, \ \Omega)_V=-\frac{1}{2}\tr\langle V, \ \Phi\cup\ast\Omega\rangle,
 \end{equation}
 where  $\Phi$ and $\Omega$ are discrete forms of the same degree.
   For  forms of different degrees, the product \eqref{2.10} is set equal to zero.
  From \eqref{2.3} and \eqref{2.7}, using the definition of the $\cup$ product, we obtain
   \begin{equation}\label{2.11}
 (\Phi, \ \Omega)_V=-\frac{1}{2}\tr\sum_{k=1}^N\sum_{s=1}^M\Phi_{k,s}\Omega_{k,s}=\sum_{k=1}^N\sum_{s=1}^M\sum_{\alpha=1}^3\varphi_{k,s,\alpha}\omega_{k,s,\alpha},
 \end{equation}
where $\Phi_{k,s}=[\varphi_{k,s,\alpha}]$ and $\Omega_{k,s}=[\omega_{k,s,\alpha}]$ are the matrices of the form \eqref{2.8}. Thus, the norm is given by
\begin{equation}\label{2.12}
 \|\Phi\|^2=(\Phi, \ \Phi)_V=\sum_{k=1}^N\sum_{s=1}^M\sum_{\alpha=1}^3(\varphi_{k,s,\alpha})^2.
 \end{equation}
Now, let's define the adjoint operator of $d^c$ with respect to the inner product \eqref{2.10}. As in \cite[Proposition 2]{S6},  we denote the adjoint operator of $d^c$  as $\delta^c$ which satisfies the following relation.

\begin{prop}
 Let $\Phi\in K^r(2)$  and $\Omega\in K^{r+1}(2)$,  $r=0,1$. Then we have
\begin{equation}\label{2.13}
 (d^c\Phi, \ \Omega)_V=-\frac{1}{2}\tr\langle \partial V, \ \Phi\cup\ast\Omega\rangle+(\Phi, \ \delta^c\Omega)_V,
\end{equation}
 where
 \begin{equation}\label{2.14}
 \delta^c\Omega=(-1)^{r+1}\ast^{-1}d^c\ast\Omega
 \end{equation}
  and $\ast^{-1}$ is the inverse of $\ast$.
\end{prop}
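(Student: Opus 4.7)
The plan is to derive the identity by applying the discrete Leibniz rule (2.6) to the product $\Phi\cup\ast\Omega$ and then evaluating on the 2-chain $V$, exploiting the duality $\langle V,d^c\omega\rangle=\langle\partial V,\omega\rangle$ between the coboundary and boundary operators. This is the discrete analogue of the standard proof of integration by parts via Stokes' theorem, adapted here to $su(2)$-valued cochains.

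First, I would observe that if $\Phi\in K^r(2)$ then $\ast\Omega\in K^{1-r}(2)$, so $\Phi\cup\ast\Omega\in K^1(2)$ and $d^c(\Phi\cup\ast\Omega)\in K^2(2)$. The Leibniz rule (2.6) gives
\begin{equation*}
d^c(\Phi\cup\ast\Omega)=d^c\Phi\cup\ast\Omega+(-1)^r\Phi\cup d^c\ast\Omega.
\end{equation*}
Pairing both sides with $V$ via (2.3) and invoking the duality $\langle V,d^c\eta\rangle=\langle\partial V,\eta\rangle$ for any $\eta\in K^1(2)$ — which for $su(2)$-valued cochains follows by linearity in the matrix entries from the corresponding identity for real-valued cochains established in \cite{S6} — one obtains
\begin{equation*}
\langle\partial V,\Phi\cup\ast\Omega\rangle=\langle V,d^c\Phi\cup\ast\Omega\rangle+(-1)^r\langle V,\Phi\cup d^c\ast\Omega\rangle.
\end{equation*}

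Next, I would apply $-\frac{1}{2}\tr$ to both sides. By the definition (2.10) of the inner product, the first term on the right becomes exactly $(d^c\Phi,\Omega)_V$, so
\begin{equation*}
-\tfrac{1}{2}\tr\langle\partial V,\Phi\cup\ast\Omega\rangle=(d^c\Phi,\Omega)_V+(-1)^r\bigl(-\tfrac{1}{2}\bigr)\tr\langle V,\Phi\cup d^c\ast\Omega\rangle.
\end{equation*}
To convert the remaining term back into an inner product, I would use (2.14): since $\delta^c\Omega=(-1)^{r+1}\ast^{-1}d^c\ast\Omega$, one has $d^c\ast\Omega=(-1)^{r+1}\ast\delta^c\Omega$. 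Substituting produces the overall sign $(-1)^r(-1)^{r+1}=-1$, so the second summand equals $-(\Phi,\delta^c\Omega)_V$. Rearranging yields (2.13).

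The only subtle point is careful sign bookkeeping in combining the factor $(-1)^r$ from the Leibniz rule with the factor $(-1)^{r+1}$ coming from the definition of $\delta^c$; everything else is a direct computation. I would not expect any genuine obstacle, since the matrix-valued extension of the pairing is handled entry-wise and both the Leibniz rule (2.6) and the $\partial$/$d^c$ duality are already at my disposal.
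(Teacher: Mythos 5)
Your proposal is correct and follows essentially the same route as the paper: the paper's proof simply defers to \cite[Proposition~2]{S6}, which is exactly the argument you spell out — apply the Leibniz rule \eqref{2.6} to $\Phi\cup\ast\Omega\in K^1(2)$, use the duality $\langle V,\ d^c\eta\rangle=\langle\partial V,\ \eta\rangle$ (the defining property of $d^c$ as adjoint of $\partial$, valid entry-wise for $su(2)$-valued cochains), and substitute $d^c\ast\Omega=(-1)^{r+1}\ast\delta^c\Omega$. Your sign bookkeeping, $(-1)^r(-1)^{r+1}=-1$, is accurate, so the computation closes without gaps.
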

\begin{proof}
Taking into account \eqref{2.6} and \eqref{2.10}, the proof coincides with that \cite[Proposition 2]{S6}.
\end{proof}
It is clear that the operator $\delta^c: K^{r+1}(2) \rightarrow K^r(2)$ defined by \eqref{2.14} serves as a  discrete counterpart to the codifferential $\delta$ (the operator adjoint of $d$).
Note that by \eqref{2.7}, we have
\begin{equation*}\label{}
\ast^{-1} x^{k,s}=V^{\sigma k,\sigma s}, \quad \ast^{-1} e^{k,s}_1=-e^{k,\sigma s}_2, \quad \ast^{-1} e^{k,s}_2=e^{\sigma k, s}_1, \quad \ast^{-1} V^{k,s}=x^{k, s}.
\end{equation*}
Here, $\sigma$  denotes the shift by one to the left, i.e., $\sigma k=k-1$. Combining these relations with \eqref{2.4} and \eqref{2.5}
 for the forms \eqref{2.2}, we obtain $\delta^c\Phi=0$, and
\begin{equation}\label{2.15}
\delta^c\Omega=\sum_{k=1}^N\sum_{s=1}^M(\Omega_{\sigma k,s}^{1}-\Omega_{k,s}^{1}+\Omega_{k,\sigma s}^{2}-\Omega_{k,s}^{2})x^{k,s},
\end{equation}
\begin{equation}\label{2.16}
\delta^c\Psi=\sum_{k=1}^N\sum_{s=1}^M(\Psi_{k,s}-\Psi_{k,\sigma s})e_1^{k,s}
-(\Psi_{k,s}-\Psi_{\sigma k,s})e_2^{k,s}.
\end{equation}

\begin{rem}
The inner product is defined over $V$ in the form \eqref{2.9}, but the relation \eqref{2.13} includes terms with the components
$\Phi_{0,s}$, \ $\Phi_{\tau N,s}$, \ $\Omega_{0,s}$, $\Omega_{\tau N,s}$, \
$\Phi_{k,0}$, \ $\Phi_{k, \tau M}$, $\Omega_{k,0}$  and $\Omega_{k, \tau M}$, which are not specified by the definition of the form $\Phi$, and $\Omega$ on $V$. Hence, these components must be defined additionally.
\end{rem}
Assume the components of the form \eqref{2.2} satisfy the following conditions
\begin{align}\label{2.17}
\Phi_{0,s}=\Phi_{N,s}, \quad \Phi_{\tau N,s}=\Phi_{1,s},  \quad \Phi_{k,0}=\Phi_{k,M}, \quad \Phi_{k,\tau M}=\Phi_{k,1}, \nonumber \\
 \Psi_{0,s}=\Psi_{N,s}, \quad \Psi_{\tau N,s}=\Psi_{1,s},  \quad \Psi_{k,0}=\Psi_{k,M}, \quad \Psi_{k,\tau M}=\Psi_{k,1}, \nonumber \\
\Omega^1_{0,s}=\Omega^1_{N,s}, \quad \Omega^1_{\tau N,s}=\Omega^1_{1,s},  \quad \Omega^1_{k,0}=\Omega^1_{k,M}, \quad \Omega^1_{k,\tau M}=\Omega^1_{k,1},\nonumber \\
\Omega^2_{0,s}=\Omega^2_{N,s}, \quad \Omega^2_{\tau N,s}=\Omega^2_{1,s},   \quad \Omega^2_{k,0}=\Omega^2_{k,M}, \quad \Omega^2_{k,\tau M}=\Omega^2_{k,1}
\end{align}
for any $k=1,2, ..., N$ and $s=1,2,.., M$.
We denote by $H^r(V)$ the Hilbert space generated by the inner product \eqref{2.10} of $r$-forms satisfying conditions \eqref{2.17}.
For the operators
\begin{equation*}\label{}
d^c: H^r(V) \rightarrow H^{r+1}(V),  \qquad \delta^c: H^{r+1}(V) \rightarrow H^r(V),
\end{equation*}
 the relation \eqref{2.13} takes the form
\begin{equation}\label{2.18}
 (d^c\Phi, \ \Omega)_V=(\Phi, \ \delta^c\Omega)_V.
\end{equation}
We'll omit the proof here. See  \cite[Proposition 3]{S6} for details. However, it's worth noting that
\begin{equation*}\label{}
  \partial V=\sum_{k=1}^Ne^1_{k,1}+\sum_{s=1}^Me^2_{\tau N,s}-\sum_{k=1}^Ne^1_{k,\tau M}-\sum_{s=1}^Me^2_{1,s},
\end{equation*}
and for any $\Phi\in H^r(V)$, $\Omega\in H^{r+1}(V)$, we have
\begin{equation*}
\langle \partial V, \ \Phi\cup\ast\Omega\rangle=0.
\end{equation*}
The  operator
\begin{equation*}\label{}
\Delta^c=d^c\delta^c+\delta^cd^c: H^r(V) \rightarrow H^r(V)
\end{equation*}
  serves as a discrete analogue of the Laplacian.
Furthermore,   $\Delta^c$  is self-adjoint, i.e.,
\begin{equation*}\label{}
 (\Delta^c\Phi, \ \Omega)_V=(\Phi, \ \Delta^c\Omega)_V,
\end{equation*}
where $\Phi$, $\Omega\in H^r(V)$. This follows immediately from \eqref{2.18}.

  \section{Discrete Yang-Mills equations}

In this section, we establish a discrete counterpart of the Yang-Mills equations for the model considered in the previous section.
We also discuss some properties of operators generated by the discrete Yang-Mills equations.
Let the 1-form
\begin{equation}\label{3.1}
    A=\sum_{k=1}^N\sum_{s=1}^M(A^1_{k,s}e_1^{k,s}+A^2_{k,s}e_2^{k,s}),
\end{equation}
where $A^1_{k,s}, A^2_{k,s}\in su(2)$,
be a discrete analogue of the $su(2)$-valued connection 1-form. The discrete analogue of the curvature form \eqref{1.1} is defined by
\begin{equation}\label{3.2}
    F=d^cA+A\cup A.
\end{equation}
Using \eqref{2.5} and the definition of $\cup$, the components $F_{k,s}\in su(2)$ of the 2-form
\begin{equation*}\label{}F=\sum_{k=1}^N\sum_{s=1}^MF_{k,s}V^{k,s}
\end{equation*}
 can be represented as
\begin{equation}\label{3.3}
    F_{k,s}=A^2_{\tau k,s}-A^2_{k,s}-A^1_{k,\tau s}+A^1_{k,s}+A^1_{k,s}A^2_{\tau k,s}-A^2_{k,s}A^1_{k,\tau s}
\end{equation}
for any $k, s$.
Now, let's consider the operator  $d^c_A: H^r(V) \rightarrow H^{r+1}(V)$ defined by
 \begin{equation}\label{3.4}
d^c_A\Omega=d^c\Omega+A\cup\Omega+(-1)^{r+1}\Omega\cup A,
\end{equation}
where $A\in H^1(V)$ and $\Omega\in H^r(V)$. Since $A$ represents the discrete connection 1-form, we refer to $d^c_A$ as the discrete analogue of the covariant exterior derivative \eqref{1.2}.
It is obvious that the  equation  $d^c_AF=0$ is an identity (a  discrete analogue of the Bianchi identity).
The discrete counterpart of the Yang-Mills equations \eqref{1.4} can be given as
\begin{equation}\label{3.5}
  d^c_A\ast F=0,
 \end{equation}
 where $F$ is the discrete curvature form \eqref{3.2}.
  The form $\ast F$ is a 0-form, and applying \eqref{2.7} yields
 \begin{equation*}\label{}
 \ast F=\sum_{k=1}^N\sum_{s=1}^MF_{k,s}x^{\tau k,\tau s}=\sum_{k=1}^N\sum_{s=1}^MF_{\sigma k,\sigma s}x^{k,s},
 \end{equation*}
 where $F_{k,s}$ is given by \eqref{3.3}. Then Equation  \eqref{3.5} can be represented as
\begin{equation*}\label{}
 \langle e^1_{k,s}, \  d_A^c\ast F\rangle=F_{k,\sigma s}-F_{\sigma k,\sigma s}+A^1_{k,s}F_{k,\sigma s}-F_{\sigma k,\sigma s}A^1_{k,s}=0,
 \end{equation*}
 \begin{equation*}\label{}
 \langle e^2_{k,s}, \  d_A^c\ast F\rangle=F_{\sigma k, s}-F_{\sigma k,\sigma s}+A^2_{k,s}F_{\sigma k, s}-F_{\sigma k,\sigma s}A^2_{k,s}=0
 \end{equation*}
 for any $k=1,2, ..., N$ and $s=1,2, ..., M$.

 The operator $d_A^c$ behaves like the coboundary operator $d^c$ under the $\cup$ operation. It follows in particular that we have the following:
 \begin{prop}
 Let $\Omega\in H^r(V)$. Then we have
 \begin{equation}\label{3.6}
 d^c_A(\Omega\cup\Phi)=d^c_A\Omega\cup\Phi+(-1)^r\Omega\cup
d^c_A\Phi.
\end{equation}
\end{prop}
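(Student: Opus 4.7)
The strategy is a direct computational verification that mirrors the proof of the graded Leibniz rule for $d_A$ in the smooth case. Tacitly, one assumes that $\Phi\in H^s(V)$ for some degree $s$; the identity is then trivially degree-preserving on both sides.

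The plan is to expand both sides using the definition \eqref{3.4} of $d^c_A$ and then reduce everything to the already-established Leibniz rule \eqref{2.6} for the coboundary $d^c$. Concretely, I would first write
\begin{equation*}
d^c_A(\Omega\cup\Phi)=d^c(\Omega\cup\Phi)+A\cup(\Omega\cup\Phi)+(-1)^{r+s+1}(\Omega\cup\Phi)\cup A,
\end{equation*}
and apply \eqref{2.6} to the first term. Then I would expand the right-hand side of \eqref{3.6} by plugging the definition of $d^c_A$ into each of its two summands, producing the six terms
\begin{equation*}
d^c\Omega\cup\Phi,\ A\cup\Omega\cup\Phi,\ (-1)^{r+1}\Omega\cup A\cup\Phi,\ (-1)^r\Omega\cup d^c\Phi,\ (-1)^r\Omega\cup A\cup\Phi,\ (-1)^{r+s+1}\Omega\cup\Phi\cup A.
\end{equation*}
Matching both sides then reduces to two observations: the associativity of $\cup$ (which holds because the coefficients are multiplied as matrices, and matrix multiplication is associative), and the cancellation of the two ``mixed'' terms $(-1)^{r+1}\Omega\cup A\cup\Phi$ and $(-1)^r\Omega\cup A\cup\Phi$, whose signs are opposite.

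The only subtle point, and the step I would pause on, is justifying associativity of $\cup$ on matrix-valued cochains: the definition of $\cup$ is given only on basis elements in Section~2, so one has to note that the extension by linearity combined with matrix multiplication in the coefficients yields an associative product, exactly as in the real-valued setting of \cite{Dezin}. Once this is in hand, the Leibniz rule \eqref{2.6} already carries this through for $d^c$, and the computation above closes the argument; no further hypotheses on $\Phi$ or on boundary/periodicity conditions \eqref{2.17} are needed, since the identity is local in $k,s$.
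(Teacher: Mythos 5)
Your proposal is correct and follows essentially the same route as the paper's proof: expand via the definition \eqref{3.4}, invoke the Leibniz rule \eqref{2.6} for $d^c$, and cancel the two mixed terms $(-1)^{r+1}\Omega\cup A\cup\Phi$ and $(-1)^r\Omega\cup A\cup\Phi$, which is exactly the paper's computation (the paper inserts these cancelling terms rather than expanding both sides, and splits into the cases $r=0,p=1$ and $r=1,p=0$ plus the trivial case $\Omega\cup\Phi\in H^2(V)$, whereas you run the algebra uniformly in the degrees). Your explicit flagging of the associativity of $\cup$ is a fair point the paper uses tacitly, but it does not change the substance of the argument.
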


\begin{proof}
Let $\Omega\in H^r(V)$  and $\Phi\in H^p(V)$, where $r=0, p=1$ or $r=1, p=0$. Then  by \eqref{2.6} and \eqref{3.4}, we obtain
\begin{align*}
d_A^c(\Omega\cup\Phi)=d^c(\Omega\cup\Phi)+A\cup\Omega\cup\Phi+(-1)^{r+p+1}\Omega\cup\Phi\cup A \\
=d^c\Omega\cup\Phi+(-1)^r\Omega\cup d^c\Phi+A\cup\Omega\cup\Phi+(-1)^{r+p+1}\Omega\cup\Phi\cup A \\
=d^c\Omega\cup\Phi+A\cup\Omega\cup\Phi+(-1)^{r+1}\Omega\cup A\cup\Phi\\
+(-1)^r(\Omega\cup d^c\Phi+\Omega\cup A\cup\Phi+(-1)^{p+1}\Omega\cup\Phi\cup A) \\
=d^c_A\Omega\cup\Phi+(-1)^r\Omega\cup d^c_A\Phi.
\end{align*}
  In cases where $\Omega\cup\Phi\in H^2(V)$, both sides of the relation \eqref{3.6} are zero.
  Therefore, Equation \eqref{3.6} holds for all cases, as desired.
\end{proof}
It should be noted that under conditions \eqref{2.17}, the following 2-forms
\begin{equation*}
\Phi=\sum_{k=1}^N\sum_{s=1}^M\Phi_{k,s}V^{k,s}, \ \tilde{\Phi}=\sum_{k=1}^N\sum_{s=1}^M\Phi_{\tau k,s}V^{k,s}, \
\hat{\Phi}=\sum_{k=1}^N\sum_{s=1}^M\Phi_{k,\tau s}V^{k,s}
\end{equation*}
take the same value over $V$, i.e.,
\begin{equation}\label{3.7}
\langle V, \ \Phi\rangle=\langle V, \ \tilde{\Phi}\rangle=\langle V, \ \hat{\Phi}\rangle=\sum_{k=1}^N\sum_{s=1}^M\Phi_{k,s}.
\end{equation}

\begin{prop}
Let $\Phi\in H^r(V)$ and $\Omega\in H^{2-r}(V)$.
Then we have
\begin{equation}\label{3.8}
 \tr\langle V, \ \Phi\cup\Omega\rangle=\tr\langle V, \ \Omega\cup\ast\ast\Phi\rangle.
\end{equation}
\end{prop}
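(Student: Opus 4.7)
The plan is to verify the identity by reducing it to a direct case-by-case calculation on basis elements, treating $r\in\{0,1,2\}$ separately.

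First I would unfold $\ast\ast$ on each basis element by applying \eqref{2.7} twice, obtaining
\begin{equation*}
\ast\ast x^{k,s}=x^{\tau k,\tau s},\quad \ast\ast e^{k,s}_1=-e^{\tau k,\tau s}_1,\quad \ast\ast e^{k,s}_2=-e^{\tau k,\tau s}_2,\quad \ast\ast V^{k,s}=V^{\tau k,\tau s}.
\end{equation*}
Thus $\ast\ast\Phi$ is a shifted copy of $\Phi$ multiplied by the sign $(-1)^{r(2-r)}$, mirroring the continuous identity $\ast\ast=(-1)^{r(n-r)}$ in dimension $n=2$.

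Next, for each degree $r$ I would expand both $\Phi\cup\Omega$ and $\Omega\cup\ast\ast\Phi$ using the multiplication table for $\cup$ recalled in Section 2. Each becomes a 2-form $\sum_{k,s}(\cdots)V^{k,s}$ whose coefficient is an explicit sum of matrix products of components of $\Phi$ and $\Omega$, with some indices shifted by $\tau$ or $\sigma$. Evaluating $\langle V,\cdot\rangle$ collapses this to a double sum over $k,s$ which I would then simplify using \eqref{3.7} together with the periodicity conditions \eqref{2.17}, letting me reindex shifted sums freely. Finally, I would invoke the cyclic property $\tr(AB)=\tr(BA)$ to commute the matrix factors inside the trace.

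The equality will follow from the interplay of three ingredients: trace cyclicity, which swaps the order of matrix factors; periodicity, which absorbs the shifts introduced by $\ast\ast$; and the sign $(-1)^{r(2-r)}$ from $\ast\ast$, which matches the sign picked up when swapping $e^{k,s}_1\cup e^{\tau k,s}_2$ against $e^{k,s}_2\cup e^{k,\tau s}_1$.

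I expect the main obstacle to lie in the case $r=1$. Here $\Phi$ and $\Omega$ each have two component families, the relevant $\cup$ products involve shifts in different coordinates ($\tau k$ in one term, $\tau s$ in the other), and one must book-keep carefully so that after reindexing under \eqref{2.17} the shifts coming from $\ast\ast\Phi$ line up exactly with those already present in $\Phi\cup\Omega$, and the factor $-1$ supplied by $\ast\ast$ in degree one is cancelled precisely by the sign from $e^{k,s}_2\cup e^{k,\tau s}_1=-V^{k,s}$. The cases $r=0$ and $r=2$ are diagonal in the indices and collapse immediately to $\tr(AB)=\tr(BA)$ after a single reindex.
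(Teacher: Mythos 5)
Your plan is correct and is essentially the paper's own proof: the paper likewise treats the cases $(r,2-r)=(2,0)$, $(0,2)$, $(1,1)$ separately, expands the $\cup$ products explicitly, uses the shift formula for $\ast\ast$ (its identity \eqref{3.9} is exactly your $\ast\ast e^{k,s}_i=-e^{\tau k,\tau s}_i$ after reindexing) together with the periodicity \eqref{2.17} and \eqref{3.7} to shift summation indices, and concludes with $\tr(AB)=\tr(BA)$. Your sign bookkeeping in the $r=1$ case matches the paper's computation, where the terms $+\Omega^2_{k,s}\Phi^1_{\sigma k,s}$ and $-\Omega^1_{k,s}\Phi^2_{k,\sigma s}$ arise precisely from the interplay of the overall minus in $\ast\ast\Phi$ with $e^{k,s}_2\cup e^{k,\tau s}_1=-V^{k,s}$, as you anticipated.
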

\begin{proof}
Let's start with forms $\Phi\in H^2(V)$ and $\Omega\in H^0(V)$. These forms are
\begin{equation*}
\Phi=\sum_{k=1}^N\sum_{s=1}^M\Phi_{k,s}V^{k,s}, \ \Omega=\sum_{k=1}^N\sum_{s=1}^M\Omega_{k,s}x^{k,s}.
\end{equation*}
By the definition of the $\cup$ product, we have
\begin{equation*}
\Phi\cup\Omega=\sum_{k=1}^N\sum_{s=1}^M\Phi_{k,s}\Omega_{\tau k,\tau s}V^{k,s}
\end{equation*}
and thus,
\begin{equation*}
\tr\langle V, \ \Phi\cup\Omega\rangle=\sum_{k=1}^N\sum_{s=1}^M\tr(\Phi_{k, s}\Omega_{\tau k,\tau s}).
\end{equation*}
Since
\begin{equation*}
\ast\ast\Phi=\sum_{k=1}^N\sum_{s=1}^M\Phi_{k,s}V^{\tau k,\tau s}=\sum_{k=1}^N\sum_{s=1}^M\Phi_{\sigma k, \sigma s}V^{k,s},
\end{equation*}
we calculate
\begin{equation*}
\Omega\cup\ast\ast\Phi=\sum_{k=1}^N\sum_{s=1}^M\Omega_{k,s}\Phi_{\sigma k, \sigma s}V^{k,s}.
\end{equation*}
By \eqref{2.17}, it is true that for any 2-form we have
\begin{equation*}\label{}
 \langle V, \ \Phi\rangle=\sum_{k=1}^N\sum_{s=1}^M\Phi_{k,s}=\sum_{k=1}^N\sum_{s=1}^M\Phi_{\sigma k, \sigma s}=\langle V, \ \ast\ast\Phi\rangle.
\end{equation*}
Hence,
\begin{align*}
\tr\langle V, \  \Omega\cup\ast\ast\Phi\rangle=\sum_{k=1}^N\sum_{s=1}^M\tr(\Omega_{k, s}\Phi_{\sigma k, \sigma s})=
\sum_{k=1}^N\sum_{s=1}^M\tr(\Omega_{\tau k,\tau s}\Phi_{k, s})\\
=\sum_{k=1}^N\sum_{s=1}^M\tr(\Phi_{k, s}\Omega_{\tau k,\tau s})=\tr\langle V, \ \Phi\cup\Omega\rangle.
\end{align*}
Here, we use the fact that $\tr(AB)=\tr(BA)$ for any two matrices $A$ and $B$ of appropriate sizes.

Similarly, we derive \eqref{3.8} for $\Phi\in H^0(V)$ and $\Omega\in H^2(V)$.

Finally, let's consider the forms
 $\Phi\in H^1(V)$ and $\Omega\in H^1(V)$ given by
\begin{equation*}
\Phi=\sum_{k=1}^N\sum_{s=1}^M(\Phi^1_{k,s}e_1^{k,s}+\Phi^2_{k,s}e_2^{k,s}), \ \Omega=\sum_{k=1}^N\sum_{s=1}^M(\Omega^1_{k,s}e_1^{k,s}+\Omega^2_{k,s}e_2^{k,s}).
\end{equation*}
In this case, the $\cup$ product is
\begin{equation*}
\Phi\cup\Omega=\sum_{k=1}^N\sum_{s=1}^M(\Phi^1_{k,s}\Omega^2_{\tau k,s}-\Phi^2_{k,s}\Omega^1_{k,\tau s})V^{k,s}.
\end{equation*}
This gives
\begin{equation*}
\tr\langle V,\Phi\cup\Omega\rangle=\sum_{k=1}^N\sum_{s=1}^M\tr(\Phi^1_{k,s}\Omega^2_{\tau k,s}-\Phi^2_{k,s}\Omega^1_{k,\tau s}).
\end{equation*}
Using \eqref{2.7}, we get
\begin{equation}\label{3.9}
\ast\ast\Phi=-\sum_{k=1}^N\sum_{s=1}^M(\Phi^1_{\sigma k,\sigma s}e_1^{k,s}+\Phi^2_{\sigma k,\sigma s}e_2^{k,s}).
\end{equation}
Then we have
\begin{equation*}\label{}
\Omega\cup\ast\ast\Phi=\sum_{k=1}^N\sum_{s=1}^M(\Omega^2_{k,s}\Phi^1_{\sigma k, s}-\Omega^1_{k,s}\Phi^2_{k,\sigma s})V^{k,s}.
\end{equation*}
By \eqref{3.7}, we obtain
\begin{align*}
\tr\langle V, \ \Omega\cup\ast\ast\Phi\rangle=\sum_{k=1}^N\sum_{s=1}^M\tr(\Omega^2_{k,s}\Phi^1_{\sigma k, s}-\Omega^1_{k,s}\Phi^2_{k,\sigma s})\\
=\sum_{k=1}^N\sum_{s=1}^M\tr(\Omega^2_{\tau k,s}\Phi^1_{k, s}-\Omega^1_{k,\tau s}\Phi^2_{k, s})
=\sum_{k=1}^N\sum_{s=1}^M\tr(\Phi^1_{k, s}\Omega^2_{\tau k,s}-\Phi^2_{k, s}\Omega^1_{k,\tau s})\\
=\tr\langle V, \ \Phi\cup\Omega\rangle.
\end{align*}
\end{proof}
\begin{prop}
Let $\Phi\in H^r(V)$ and $\Omega\in H^{2-r}(V)$.
Then
\begin{equation*}\label{}
 (\Phi, \ \ast^{-1}\Omega)_V=(\ast\Phi, \ \Omega)_V.
 \end{equation*}
 \end{prop}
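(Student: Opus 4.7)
The plan is to derive the identity directly from Proposition 3.3 (equation \eqref{3.8}), using only the definition of the inner product together with the symmetry $(\Phi,\Omega)_V=(\Omega,\Phi)_V$, which is evident from the coordinate expression \eqref{2.11}. Note first that since $\ast$ raises the degree from $r$ to $2-r$, its inverse $\ast^{-1}$ takes $H^{2-r}(V)$ back into $H^r(V)$, so the left-hand side $(\Phi,\ast^{-1}\Omega)_V$ is a pairing of two $r$-forms and the right-hand side $(\ast\Phi,\Omega)_V$ is a pairing of two $(2-r)$-forms; both are well-defined instances of \eqref{2.10}.

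The steps I would carry out are as follows. First, apply the definition \eqref{2.10} to the left-hand side and use $\ast\ast^{-1}=\mathrm{id}$ to obtain
\begin{equation*}
(\Phi,\ast^{-1}\Omega)_V=-\tfrac{1}{2}\tr\langle V,\Phi\cup\ast(\ast^{-1}\Omega)\rangle=-\tfrac{1}{2}\tr\langle V,\Phi\cup\Omega\rangle.
\end{equation*}
Second, invoke Proposition 3.3, i.e.\ equation \eqref{3.8}, to swap the factors in the cup product at the cost of an extra $\ast\ast$:
\begin{equation*}
-\tfrac{1}{2}\tr\langle V,\Phi\cup\Omega\rangle=-\tfrac{1}{2}\tr\langle V,\Omega\cup\ast\ast\Phi\rangle.
\end{equation*}
Third, recognize the resulting expression as $(\Omega,\ast\Phi)_V$ by the definition \eqref{2.10} applied to the pair $\Omega,\ast\Phi\in H^{2-r}(V)$. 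Finally, use the symmetry of the inner product to conclude $(\Omega,\ast\Phi)_V=(\ast\Phi,\Omega)_V$, which yields the claimed identity.

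I do not foresee a real obstacle: the argument is essentially a bookkeeping exercise once \eqref{3.8} is available. The only subtle point is making sure that the degree hypothesis of Proposition 3.3 is correctly matched at each application (so that the resulting inner products are pairings of forms of equal degree), and that the symmetry $(\cdot,\cdot)_V=(\cdot,\cdot)_V^{\,t}$ is invoked explicitly, as the cochain representation \eqref{2.11} shows it is a symmetric bilinear form even though its defining formula \eqref{2.10} is not manifestly symmetric in $\Phi$ and $\Omega$.
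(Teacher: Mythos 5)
Your proposal is correct and follows essentially the same route as the paper's own proof: apply the definition \eqref{2.10} with $\ast\ast^{-1}=\mathrm{id}$, invoke the trace identity \eqref{3.8} to swap the cup factors at the cost of $\ast\ast$, recognize the result as $(\Omega,\ast\Phi)_V$, and conclude by symmetry of the inner product. Your explicit attention to degree matching and to the symmetry visible in \eqref{2.11} is sound bookkeeping that the paper leaves implicit, but it does not change the argument.
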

\begin{proof}
From the definition of the inner product \eqref{2.10} and by  Proposition~3.2,  it follows  that
\begin{align*}
(\Phi, \ \ast^{-1}\Omega)_V=-\frac{1}{2}\tr\langle V, \  \Phi\cup\ast\ast^{-1}\Omega\rangle=-\frac{1}{2}\tr\langle V, \  \Phi\cup\Omega\rangle \\
=-\frac{1}{2}\tr\langle V, \  \Omega\cup\ast\ast\Phi\rangle=(\Omega, \ \ast\Phi)_V=(\ast\Phi, \ \Omega)_V.
\end{align*}
\end{proof}
\begin{cor}
\begin{equation*}\label{}
 \tr\langle V, \ \Phi\cup\Omega\rangle=\tr\langle V, \ \ast\Phi\cup\ast\Omega\rangle.
\end{equation*}
\end{cor}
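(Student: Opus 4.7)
The plan is to reduce the claimed identity directly to Proposition 3.3 via the definition of the inner product \eqref{2.10}, with no new computation required. The key observation is that both sides of the target identity have the form $\tr\langle V,\,\cdot\cup\cdot\rangle$, and the inner product $(\cdot,\cdot)_V$ is precisely $-\frac{1}{2}$ times such a quantity, provided we insert an appropriate star on the second factor.

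First, I would rewrite the left-hand side by introducing $\ast\ast^{-1}\Omega = \Omega$:
\begin{equation*}
\tr\langle V,\ \Phi\cup\Omega\rangle = \tr\langle V,\ \Phi\cup\ast\ast^{-1}\Omega\rangle = -2\,(\Phi,\ \ast^{-1}\Omega)_V,
\end{equation*}
using the definition \eqref{2.10}. Similarly, I would rewrite the right-hand side as
\begin{equation*}
\tr\langle V,\ \ast\Phi\cup\ast\Omega\rangle = -2\,(\ast\Phi,\ \Omega)_V.
\end{equation*}
Proposition 3.3 states precisely that $(\Phi,\ast^{-1}\Omega)_V = (\ast\Phi,\Omega)_V$, so comparing the two expressions yields the claim immediately.

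The only subtlety to check is that the degrees line up: for $\Phi\in H^r(V)$ and $\Omega\in H^{2-r}(V)$, we have $\ast^{-1}\Omega\in H^r(V)$ and $\ast\Phi\in H^{2-r}(V)$, so both inner products pair forms of equal degree and are nonzero in the relevant sense. The identity $\ast\ast^{-1}\Omega=\Omega$ follows directly from the explicit formulas for $\ast$ and $\ast^{-1}$ recorded after Proposition 2.1. There is no substantive obstacle here; the corollary is essentially a repackaging of Proposition 3.3 in the language of cup products and traces, and the main task is simply bookkeeping with the signs and the factor $-\frac{1}{2}$ in the definition of $(\cdot,\cdot)_V$.
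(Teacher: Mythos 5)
Your proof is correct and follows exactly the route the paper intends: the corollary is stated without proof precisely because, as you show, it is just Proposition~3.3 rewritten via the definition \eqref{2.10}, since $\tr\langle V,\Phi\cup\Omega\rangle=-2(\Phi,\ast^{-1}\Omega)_V$ and $\tr\langle V,\ast\Phi\cup\ast\Omega\rangle=-2(\ast\Phi,\Omega)_V$. Your degree bookkeeping ($\ast^{-1}\Omega\in H^r(V)$, $\ast\Phi\in H^{2-r}(V)$) is also the right check, so nothing is missing.
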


\begin{prop}
 Let $\Phi\in H^r(V)$  and $\Omega\in H^{r+1}(V)$,  $r=0,1$. Then we have
\begin{equation}\label{3.10}
 (d_A^c\Phi, \ \Omega)_V=(\Phi, \ \delta_A^c\Omega)_V,
\end{equation}
where
\begin{equation}\label{3.11}
  \delta_A^c\Omega=\delta^c\Omega+\ast^{-1}(\ast\Omega\cup\ast\ast A+(-1)^{r+1}A\cup\ast\Omega).
\end{equation}
\end{prop}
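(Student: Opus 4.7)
The plan is to expand $d_A^c\Phi$ by its definition \eqref{3.4}, split the inner product into three pieces, handle the linear $d^c$ piece with \eqref{2.18}, and move the two $A$-terms across the inner product using the trace identity \eqref{3.8} from Proposition~3.2. The target expression for $\delta_A^c$ then falls out by collecting terms.

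More concretely, I would write
\begin{equation*}
(d_A^c\Phi,\Omega)_V=(d^c\Phi,\Omega)_V+(A\cup\Phi,\Omega)_V+(-1)^{r+1}(\Phi\cup A,\Omega)_V.
\end{equation*}
The first summand equals $(\Phi,\delta^c\Omega)_V$ by Proposition on adjointness (the $\partial V$ term vanishes on $H^r(V)$ by the discussion after \eqref{2.18}). For the second summand I would unfold the inner product via \eqref{2.10} to get $-\tfrac12\tr\langle V, A\cup\Phi\cup\ast\Omega\rangle$ and then apply \eqref{3.8} with the grouping $A$ against $\Phi\cup\ast\Omega$: since $A$ has degree $1$ and $\Phi\cup\ast\Omega$ has degree $1$, their degrees sum to $2$, so \eqref{3.8} yields
\begin{equation*}
\tr\langle V, A\cup(\Phi\cup\ast\Omega)\rangle=\tr\langle V,(\Phi\cup\ast\Omega)\cup\ast\ast A\rangle=\tr\langle V,\Phi\cup(\ast\Omega\cup\ast\ast A)\rangle,
\end{equation*}
which identifies $(A\cup\Phi,\Omega)_V=(\Phi,\ast^{-1}(\ast\Omega\cup\ast\ast A))_V$. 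For the third summand the grouping is already right: $(\Phi\cup A,\Omega)_V=-\tfrac12\tr\langle V,\Phi\cup(A\cup\ast\Omega)\rangle=(\Phi,\ast^{-1}(A\cup\ast\Omega))_V$.

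Putting the three pieces together and factoring $\Phi$ out of the second slot, the right-hand side becomes
\begin{equation*}
\bigl(\Phi,\ \delta^c\Omega+\ast^{-1}(\ast\Omega\cup\ast\ast A)+(-1)^{r+1}\ast^{-1}(A\cup\ast\Omega)\bigr)_V,
\end{equation*}
which matches \eqref{3.11}. The only delicate point is the choice of grouping in the $A\cup\Phi$ term: treating $\Phi\cup\ast\Omega$ as a single factor is what produces the $\ast\ast A$ on the right and so matches the asymmetry of \eqref{3.11}. I would also double-check the degree bookkeeping: in the cases $(r,p)=(0,1)$ and $(1,0)$ the forms involved have degrees compatible with \eqref{3.8}, and the periodicity conditions \eqref{2.17} are what make the auxiliary identity \eqref{3.7} (and hence \eqref{3.8}) applicable. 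I do not expect genuine difficulty beyond keeping the signs and shifts straight; this is essentially the matrix-valued generalization of the argument already used for \eqref{2.18}.
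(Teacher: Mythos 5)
Your proposal is correct and takes essentially the same route as the paper's own proof: the paper likewise expands $d_A^c\Phi$ via \eqref{3.4}, disposes of the linear term with \eqref{2.18}, and applies \eqref{3.8} with precisely your grouping (degree-$1$ form $A$ against the degree-$1$ form $\Phi\cup\ast\Omega$) to generate the $\ast\ast A$ term, leaving $\Phi\cup A\cup\ast\Omega$ untouched before factoring out $\ast^{-1}$. Your remarks on the asymmetric grouping and the degree bookkeeping match the paper's computation exactly, so there is nothing to correct.
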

\begin{proof}
This is a fairly straight forward proof. Using \eqref{2.18}, \eqref{3.4}, and \eqref{3.8}, we have
\begin{align*}
(d_A^c\Phi, \ \Omega)_V=-\frac{1}{2}\tr\langle V, \  d_A^c\Phi\cup\ast\Omega\rangle\\
=-\frac{1}{2}\tr\langle V, \  (d^c\Phi+A\cup\Phi+(-1)^{r+1}\Phi\cup A)\cup\ast\Omega\rangle\\
=(d^c\Phi, \ \Omega)_V-\frac{1}{2}\tr\langle V, \ A\cup\Phi\cup\ast\Omega\rangle-(-1)^{r+1}\frac{1}{2}\tr\langle V, \  \Phi\cup A\cup\ast\Omega\rangle\\
=(\Phi, \ \delta^c\Omega)_V-\frac{1}{2}\tr\langle V, \ \Phi\cup\ast\Omega\cup\ast\ast A\rangle-(-1)^{r+1}\frac{1}{2}\tr\langle V, \  \Phi\cup A\cup\ast\Omega\rangle\\
=(\Phi, \ \delta^c\Omega)_V-\frac{1}{2}\tr\langle V, \ \Phi\cup\ast\ast^{-1}(\ast\Omega\cup\ast\ast A+(-1)^{r+1} A\cup\ast\Omega)\rangle\\
=(\Phi, \ \delta^c\Omega)_V+(\Phi, \ \ast^{-1}(\ast\Omega\cup\ast\ast A))_V+(-1)^{r+1}(\Phi, \ \ast^{-1}(A\cup\ast\Omega))_V\\
=(\Phi, \ \delta^c\Omega+\ast^{-1}(\ast\Omega\cup\ast\ast A+(-1)^{r+1}A\cup\ast\Omega))_V
=(\Phi, \ \delta_A^c\Omega)_V.
\end{align*}
\end{proof}
Let us consider the following operator
\begin{equation*}\label{}
\Delta_A^c=d_A^c\delta_A^c+\delta_A^cd_A^c: H^r(V) \rightarrow H^r(V).
\end{equation*}
This operator serves as a discrete analogue of the Laplace-type operator \eqref{1.5}.

\begin{prop}
 For any  $r$-form $\Phi\in H^r(V)$,   we have
\begin{equation*}\label{}
 (\Delta_A^c\Phi, \ \Phi)_V\geq 0
\end{equation*}
and $(\Delta_A^c\Phi, \ \Phi)_V=0$ if and only if $\Delta_A^c\Phi=0$.
\end{prop}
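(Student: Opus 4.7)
The plan is to mimic the standard Hodge-theoretic positivity argument: expand $(\Delta_A^c\Phi,\Phi)_V$ using the definition of $\Delta_A^c$, and then apply the adjointness identity \eqref{3.10} from Proposition~3.5 to recast each summand as a squared norm. Once this is done, nonnegativity and the equality characterization both become transparent consequences of the positive-definiteness of the inner product recorded in \eqref{2.12}.

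Concretely, by linearity I would first split
$(\Delta_A^c\Phi,\Phi)_V = (d_A^c\delta_A^c\Phi,\Phi)_V + (\delta_A^cd_A^c\Phi,\Phi)_V$.
For the first summand, I would view $\delta_A^c\Phi \in H^{r-1}(V)$ as playing the role of the left argument in \eqref{3.10}, which immediately rewrites $(d_A^c\delta_A^c\Phi,\Phi)_V$ as $(\delta_A^c\Phi,\delta_A^c\Phi)_V = \|\delta_A^c\Phi\|^2$. For the second summand, I would instead take $d_A^c\Phi \in H^{r+1}(V)$ as the right argument in \eqref{3.10} to get $(\Phi,\delta_A^c d_A^c\Phi)_V = \|d_A^c\Phi\|^2$, and then use the manifest symmetry of the inner product visible from \eqref{2.11} to identify this with $(\delta_A^cd_A^c\Phi,\Phi)_V$. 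Adding yields $(\Delta_A^c\Phi,\Phi)_V = \|d_A^c\Phi\|^2 + \|\delta_A^c\Phi\|^2 \geq 0$, which is the desired inequality.

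For the equivalence, the ``if'' direction is immediate. For the converse, if $(\Delta_A^c\Phi,\Phi)_V = 0$ then both nonnegative summands $\|d_A^c\Phi\|^2$ and $\|\delta_A^c\Phi\|^2$ vanish; by the positive-definiteness of the norm \eqref{2.12} this forces $d_A^c\Phi = 0$ and $\delta_A^c\Phi = 0$, whence $\Delta_A^c\Phi = 0$.

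I do not anticipate any genuine obstacle; the proof leans entirely on Proposition~3.5 together with the symmetry and positive-definiteness of \eqref{2.10}. The only bookkeeping point is the restriction $r=0,1$ in Proposition~3.5, which means that at the boundary degrees $r=0$ (where $\delta_A^c\Phi$ is taken to be zero) and $r=2$ (where $d_A^c\Phi$ vanishes) one of the two summands in the identity is simply absent, but both the inequality and the equality characterization continue to hold unchanged.
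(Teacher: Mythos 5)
Your proof is correct and follows essentially the same route as the paper: decompose $(\Delta_A^c\Phi,\Phi)_V$ into the two summands, apply the adjointness identity \eqref{3.10} to rewrite each as $\|\delta_A^c\Phi\|^2$ and $\|d_A^c\Phi\|^2$, and deduce both the inequality and the equality characterization from positive-definiteness of the norm \eqref{2.12}. Your remark on the boundary degrees $r=0$ and $r=2$, where one summand is simply absent, is a careful touch the paper leaves implicit.
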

\begin{proof}
By \eqref{3.10}, one has
\begin{align*}\label{}
 (\Delta_A^c\Phi, \ \Phi)_V=(d_A^c\delta_A^c\Phi, \ \Phi)_V+(\delta_A^cd_A^c\Phi, \ \Phi)_V\\=
 (\delta_A^c\Phi, \ \delta_A^c\Phi)_V+(d_A^c\Phi, \ d_A^c\Phi)_V=\|\delta_A^c\Phi\|^2+\|d_A^c\Phi\|^2,
\end{align*}
where $\|\cdot\|$ is the norm given by \eqref{2.12}.
From this, if $(\Delta_A^c\Phi, \ \Phi)_V=0$, then $\|\delta_A^c\Phi\|^2=0$ and $\|d_A^c\Phi\|^2=0$.  It gives $\delta_A^c\Phi=0$ and $d_A^c\Phi=0$. Hence,
\begin{equation*}
\Delta_A^c\Phi=d^c_A\delta_A^c\Phi+\delta_A^cd_A^c\Phi=0.
\end{equation*}
\end{proof}
\begin{cor}
$\Delta_A^c\Phi=0$ if and only if $\delta_A^c\Phi=0$ and $d_A^c\Phi=0$.
\end{cor}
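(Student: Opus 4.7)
The plan is to deduce this corollary directly from Proposition 3.5 and the identity
$(\Delta_A^c\Phi, \Phi)_V = \|\delta_A^c\Phi\|^2 + \|d_A^c\Phi\|^2$
that was established in its proof. The statement is a biconditional, so I would split it into two implications and handle them separately.

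For the reverse implication, I would simply observe that if $\delta_A^c\Phi = 0$ and $d_A^c\Phi = 0$, then by the very definition $\Delta_A^c\Phi = d_A^c\delta_A^c\Phi + \delta_A^c d_A^c\Phi = d_A^c 0 + \delta_A^c 0 = 0$. No further computation is needed.

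For the forward implication, I would assume $\Delta_A^c\Phi = 0$, which trivially yields $(\Delta_A^c\Phi, \Phi)_V = 0$. Invoking the identity from the proof of Proposition 3.5, this gives $\|\delta_A^c\Phi\|^2 + \|d_A^c\Phi\|^2 = 0$, and since the norm \eqref{2.12} is nonnegative and vanishes only on the zero form, both summands must vanish, so $\delta_A^c\Phi = 0$ and $d_A^c\Phi = 0$.

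There is no real obstacle here; the only subtlety worth noting is that the positive-definiteness of the norm on $H^r(V)$ is essential to conclude that each squared norm vanishes separately. Since \eqref{2.12} expresses $\|\Phi\|^2$ as a sum of squares of the real coefficients $\varphi_{k,s,\alpha}$, this positive-definiteness is immediate, and the argument is complete.
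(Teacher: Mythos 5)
Your proposal is correct and matches the paper's (implicit) argument exactly: the reverse direction follows from the definition $\Delta_A^c=d_A^c\delta_A^c+\delta_A^cd_A^c$, and the forward direction reuses the identity $(\Delta_A^c\Phi,\ \Phi)_V=\|\delta_A^c\Phi\|^2+\|d_A^c\Phi\|^2$ from the proof of the preceding proposition together with the positive definiteness of the norm \eqref{2.12}. Your remark on why positive definiteness is essential is a fair observation, but otherwise there is nothing to add.
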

\begin{prop}
 The operator  $\Delta_A^c: H^r(V) \rightarrow H^r(V)$  is self-adjoint, i.e.,
\begin{equation*}\label{}
 (\Delta_A^c\Phi, \ \Omega)_V=(\Phi, \ \Delta_A^c\Omega)_V.
\end{equation*}
\end{prop}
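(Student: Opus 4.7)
The plan is to imitate the computation appearing in the proof of Proposition~3.6, but carried out with two distinct forms $\Phi$ and $\Omega$ rather than with $\Phi$ paired against itself. The ingredients are already in place: the adjoint relation \eqref{3.10} from Proposition~3.5, and the symmetry of the inner product \eqref{2.10} in its two arguments, which is visible from the explicit formula \eqref{2.11} since the coefficients $\varphi_{k,s,\alpha}$ and $\omega_{k,s,\alpha}$ are real numbers.

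First I would split
\begin{equation*}
(\Delta_A^c\Phi, \ \Omega)_V = (d_A^c\delta_A^c\Phi, \ \Omega)_V + (\delta_A^c d_A^c\Phi, \ \Omega)_V.
\end{equation*}
To the first summand I would apply \eqref{3.10} with $\delta_A^c\Phi$ playing the role of the form on the left, moving $d_A^c$ across the inner product to obtain $(\delta_A^c\Phi, \ \delta_A^c\Omega)_V$. For the second summand I would first use symmetry of the inner product to rewrite it as $(\Omega, \ \delta_A^c d_A^c\Phi)_V$, then apply \eqref{3.10} with $\Omega$ on the left and $d_A^c\Phi$ on the right, yielding $(d_A^c\Omega, \ d_A^c\Phi)_V$; one more appeal to symmetry converts this into $(d_A^c\Phi, \ d_A^c\Omega)_V$. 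Combining the two computations produces
\begin{equation*}
(\Delta_A^c\Phi, \ \Omega)_V = (\delta_A^c\Phi, \ \delta_A^c\Omega)_V + (d_A^c\Phi, \ d_A^c\Omega)_V,
\end{equation*}
which is manifestly symmetric in $\Phi$ and $\Omega$. Running the identical manipulation on $(\Phi, \ \Delta_A^c\Omega)_V$ yields the same right-hand side, which gives the claim.

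I expect no real obstacle here: the argument is purely formal, depending only on Proposition~3.5 and the symmetry of $(\cdot, \cdot)_V$. The one point worth double-checking is that \eqref{3.10} has already absorbed the periodicity conditions \eqref{2.17}, so no boundary contribution of the form $-\frac{1}{2}\tr\langle\partial V, \cdot\rangle$ reappears when $d_A^c$ and $\delta_A^c$ are interchanged; this was the content of the remark preceding \eqref{2.18}. For the degenerate cases $r=0$ and $r=2$, one of the two summands of $\Delta_A^c$ vanishes on the relevant domain, and the remaining summand is handled exactly as above.
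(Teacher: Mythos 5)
Your proposal is correct and takes essentially the same route as the paper, whose entire proof is ``By \eqref{3.10}, it is obvious'': your splitting of $\Delta_A^c$, the two applications of the adjoint relation \eqref{3.10} combined with the symmetry of $(\cdot,\ \cdot)_V$ evident from \eqref{2.11}, and the resulting symmetric expression $(\delta_A^c\Phi,\ \delta_A^c\Omega)_V+(d_A^c\Phi,\ d_A^c\Omega)_V$ simply spell out the details the paper leaves implicit. Your two side remarks---that \eqref{3.10} already incorporates the periodicity conditions \eqref{2.17} so no $\langle\partial V,\ \cdot\rangle$ term reappears, and that in the degenerate cases $r=0,2$ one summand of $\Delta_A^c$ vanishes---are accurate and handle the only points requiring care.
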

\begin{proof}
By \eqref{3.10}, it is obvious.
\end{proof}
\begin{rem}
Taking into account  \eqref{2.14}, the operator $\delta_A^c\Phi$ can be  expressed  as
\begin{equation}\label{3.12}
  \delta_A^c\Phi=(-1)^r\ast^{-1}(d^c\ast\Phi+(-1)^r\ast\Phi\cup\ast\ast A+A\cup\ast\Phi),
\end{equation}
where $\Phi\in H^r(V)$. On the other hand, by \eqref{3.4}, it is straightforward to calculate that
\begin{equation}\label{3.13}
  \delta_A^c\Phi=(-1)^r\ast^{-1}d_A^c\ast\Phi+\ast^{-1}(\ast\Phi\cup(\ast\ast A+A)).
\end{equation}
It's worth noting that in the continuous case, we have $\delta_A=(-1)^r\ast^{-1}d_A\ast$. However, this doesn't hold true in our discrete model. More precisely, it follows from \eqref{2.7} that the operation $\ast\ast$ corresponds to a shift, meaning $\ast\ast A$ takes the form \eqref{3.9}, unlike in the continuous case where $\ast\ast A=-A$. Consequently, the discrete Yang-Mills equations \eqref{3.5} does not follow  immediately from the equation  $\Delta_A^c F=0$.
\end{rem}
The discrete counterpart of Equation \eqref{1.3} has the form
\begin{equation}\label{3.14}
  \delta_A^c F=0,
\end{equation}
where $F$ is given by \eqref{3.2}. Again, using \eqref{2.16} and \eqref{3.9}, from \eqref{3.11}, Equation \eqref{3.14} can be represented in  terms of difference  equations as follows
\begin{equation*}\label{}
 F_{k,\sigma s}-F_{\sigma k,\sigma s}+A^1_{k,s}F_{k,\sigma s}-F_{\sigma k,\sigma s}A^1_{\sigma k,\sigma s}=0,
 \end{equation*}
 \begin{equation*}\label{}
 F_{\sigma k, s}-F_{\sigma k,\sigma s}+A^2_{k,s}F_{\sigma k, s}-F_{\sigma k,\sigma s}A^2_{\sigma k,\sigma s}=0
 \end{equation*}
 for any $k=1,2, ..., N$ and $s=1,2, ..., M$. Comparing the difference representations of Equations \eqref{3.5} and \eqref{3.14}, it can be noted that only the last terms of the difference equations slightly differ in both cases.

\section{Discrete Yang-Mills equations on a combinatorial torus}

In this section, following \cite{S6}, we delve into an example of our discrete model on a combinatorial torus in detail. Recall that the torus can be regarded as the topological space obtained by taking a rectangle and identifying each pair of opposite sides with the same orientation. As was already discussed  in \cite{S6}, let's consider the partitioning of the plane $\mathbb{R}^2$   by the straight lines
$x=k$ and $y=s$, where $k,s\in\mathbb{Z}$. We denote an open square bounded by these lines as
$V_{k,s}$, with vertices labeled as $x_{k,s}, \ x_{\tau k,s}, \ x_{k,\tau s}$, \ $x_{\tau k, \tau s}$, where $\tau k=k+1$. Further, we define
$e_{k,s}^1$ and $e_{k,s}^2$ as the open intervals  $(x_{k,s}, \ x_{\tau k,s})$ and $(x_{k,s}, \ x_{k, \tau s})$, respectively.
These geometric objects can be associated with the combinatorial objects we have previously discussed. We identify the collection
$V_{k,s}$ with $V$  given by  \eqref{2.9}  and let
$N=M=2$.
If we identify the points and the intervals on the boundary of
$V$  as follows
\begin{align}\label{4.1}
x_{1,1}=x_{3,1}=x_{1,3}=x_{3,3}, \qquad x_{1,2}=x_{3,2}, \qquad x_{2,1}=x_{2,3}, \nonumber \\
e_{1,1}^1=e_{1,3}^1, \qquad e_{2,1}^1=e_{2,3}^1, \qquad e_{1,1}^2=e_{3,1}^2, \qquad e_{1,2}^2=e_{3,2}^2,
\end{align}
we obtain the geometric object which is homeomorphic to the torus. For a visual representation, see \cite[Figure~1]{S6}. Let $C(T)$ denote the complex corresponding to this geometric object, serving as a combinatorial model of the torus. Similarly,  we denote by $K(T)$ the complex of cochains over $C(T)$.
From \eqref{4.1}, it follows that forms  on the combinatorial torus satisfy the conditions \eqref{2.17}.

Let us now proceed  to the construction of the discrete Yang-Mills equations on the combinatorial torus. The discrete connection 1-form   $A\in K^1(T)$ has the form
\begin{equation*}\label{}
    A=\sum_{k=1}^2\sum_{s=1}^2(A^1_{k,s}e_1^{k,s}+A^2_{k,s}e_2^{k,s}).
\end{equation*}
Using the definition of the $\cup$ product and \eqref{2.5}, we have
\begin{align*}\label{}
 A\cup A=(A^1_{1,1}A^2_{2,1}-A^2_{1,1}A^1_{1,2})V^{1,1}+(A^1_{2,1}A^2_{1,1}-A^2_{2,1}A^1_{2,2})V^{2,1}\nonumber\\
  +(A^1_{1,2}A^2_{2,2}-A^2_{1,2}A^1_{1,1})V^{1,2}+(A^1_{2,2}A^2_{1,2}-A^2_{2,2}A^1_{2,1})V^{2,2}
\end{align*}
and
\begin{align*}\label{}
 d^cA=(A^1_{1,1}-A^1_{1,2}+A^2_{2,1}-A^2_{1,1})V^{1,1}+(A^1_{2,1}-A^1_{2,2}-A^2_{2,1}+A^2_{1,1})V^{2,1}\nonumber\\
  +(A^1_{1,2}-A^1_{1,1}+A^2_{2,2}-A^2_{1,2})V^{1,2}+(A^1_{2,2}-A^1_{2,1}+A^2_{1,2}-A^2_{2,2})V^{2,2}.
\end{align*}
Then the discrete curvature form  \eqref{3.2} on $C(T)$ can be represented as
\begin{equation*}\label{}
     F=F_{1,1}V^{1,1}+F_{2,1}V^{2,1}+F_{1,2}V^{1,2}+F_{2,2}V^{2,2},
\end{equation*}
where
\begin{align*}\label{}
 F_{1,1}=A^1_{1,1}-A^1_{1,2}+A^2_{2,1}-A^2_{1,1}+A^1_{1,1}A^2_{2,1}-A^2_{1,1}A^1_{1,2},\\
 F_{2,1}=A^1_{2,1}-A^1_{2,2}-A^2_{2,1}+A^2_{1,1}+A^1_{2,1}A^2_{1,1}-A^2_{2,1}A^1_{2,2},\\
 F_{1,2}=A^1_{1,2}-A^1_{1,1}+A^2_{2,2}-A^2_{1,2}+A^1_{1,2}A^2_{2,2}-A^2_{1,2}A^1_{1,1}, \\
 F_{2,2}=A^1_{2,2}-A^1_{2,1}+A^2_{1,2}-A^2_{2,2}+A^1_{2,2}A^2_{1,2}-A^2_{2,2}A^1_{2,1}.
\end{align*}
By \eqref{2.7} and \eqref{4.1}, we obtain
\begin{equation*}\label{}
     \ast F=F_{2,2}x^{1,1}+F_{1,2}x^{2,1}+F_{2,1}x^{1,2}+F_{1,1}x^{2,2}.
\end{equation*}
Consequently, we can calculate the following:
\begin{align}\label{4.2}
 d^c\ast F=(F_{1,2}-F_{2,2})e_1^{1,1}+(F_{2,2}-F_{1,2})e_1^{2,1}+(F_{2,2}-F_{2,1})e_2^{1,2}\nonumber\\+(F_{2,1}-F_{2,2})e_2^{1,1}+(F_{1,1}-F_{2,1})e_1^{1,2}+(F_{2,1}-F_{1,1})e_1^{2,2}+
  \nonumber\\+(F_{1,2}-F_{1,1})e_2^{2,2}+(F_{1,1}-F_{1,2})e_2^{2,1},
\end{align}
\begin{align}\label{4.3}
\ast F\cup A=F_{2,2}A^1_{1,1}e_1^{1,1}+F_{1,2}A^1_{2,1}e_1^{2,1}+F_{2,1}A^2_{1,2}e_2^{1,2}+
  F_{2,2}A^2_{1,1}e_2^{1,1}\nonumber \\+F_{2,1}A^1_{1,2}e_1^{1,2}+F_{1,1}A^1_{2,2}e_1^{2,2}+F_{1,1}A^2_{2,2}e_2^{2,2}+F_{1,2}A^2_{2,1}e_2^{2,1},
\end{align}
and
\begin{align}\label{4.4}
A\cup\ast F=A^1_{1,1}F_{1,2}e_1^{1,1}+A^1_{2,1}F_{2,2}e_1^{2,1}+A^2_{1,2}F_{2,2}e_2^{1,2}+ A^2_{1,1}F_{2,1}e_2^{1,1}\nonumber \\+
 A^1_{1,2}F_{1,1}e_1^{1,2}+A^1_{2,2}F_{2,1}e_1^{2,2}+A^2_{2,2}F_{1,2}e_2^{2,2}+A^2_{2,1}F_{1,1}e_2^{2,1}.
\end{align}
Combining \eqref{4.2}, \eqref{4.3}, and \eqref{4.4}, we obtain the representation of $d_A^c\ast F$ on the combinatorial torus.
Then Equation \eqref{3.5} can be expressed in terms of difference equations as
\begin{align*}\label{}
 \langle e^1_{1,1}, \  d_A^c\ast F\rangle=F_{1,2}-F_{2,2}+A^1_{1,1}F_{1,2}-F_{2,2}A^1_{1,1}=0,\\
 \langle e^1_{2,1}, \  d_A^c\ast F\rangle=F_{2,2}-F_{1,2}+A^1_{2,1}F_{2,2}-F_{1,2}A^1_{2,1}=0,\\
 \langle e^2_{1,2}, \  d_A^c\ast F\rangle=F_{2,2}-F_{2,1}+A^2_{1,2}F_{2,2}-F_{2,1}A^2_{1,2}=0,\\
 \langle e^2_{1,1}, \  d_A^c\ast F\rangle=F_{2,1}-F_{2,2}+A^2_{1,1}F_{2,1}-F_{2,2}A^2_{1,1}=0,\\
 \langle e^1_{1,2}, \  d_A^c\ast F\rangle=F_{1,1}-F_{2,1}+A^1_{1,2}F_{1,1}-F_{2,1}A^1_{1,2}=0,\\
 \langle e^1_{2,2}, \  d_A^c\ast F\rangle=F_{2,1}-F_{1,1}+A^1_{2,2}F_{2,1}-F_{1,1}A^1_{2,2}=0,\\
 \langle e^2_{2,2}, \  d_A^c\ast F\rangle=F_{1,2}-F_{1,1}+A^2_{2,2}F_{1,2}-F_{1,1}A^2_{2,2}=0,\\
 \langle e^2_{2,1}, \  d_A^c\ast F\rangle=F_{1,1}-F_{1,2}+A^2_{2,1}F_{1,1}-F_{1,2}A^2_{2,1}=0.
\end{align*}
Similarly, taking into account \eqref{3.13},  Equation \eqref{3.14} is equivalent to the following system of difference equations
\begin{align}\label{4.5}
 F_{1,2}-F_{2,2}+A^1_{1,1}F_{1,2}-F_{2,2}A^1_{2,2}=0, \nonumber \\
 F_{2,2}-F_{1,2}+A^1_{2,1}F_{2,2}-F_{1,2}A^1_{1,2}=0, \nonumber \\
 F_{2,2}-F_{2,1}+A^2_{1,2}F_{2,2}-F_{2,1}A^2_{2,1}=0, \nonumber \\
 F_{2,1}-F_{2,2}+A^2_{1,1}F_{2,1}-F_{2,2}A^2_{2,2}=0,\nonumber\\
 F_{1,1}-F_{2,1}+A^1_{1,2}F_{1,1}-F_{2,1}A^1_{2,1}=0,\nonumber\\
 F_{2,1}-F_{1,1}+A^1_{2,2}F_{2,1}-F_{1,1}A^1_{1,1}=0,\nonumber\\
 F_{1,2}-F_{1,1}+A^2_{2,2}F_{1,2}-F_{1,1}A^2_{1,1}=0,\nonumber\\
 F_{1,1}-F_{1,2}+A^2_{2,1}F_{1,1}-F_{1,2}A^2_{1,2}=0.
\end{align}

Finally, we give a matrix form of Equations \eqref{3.5} and \eqref{3.14} on the combinatorial torus. Let's introduce the following row vectors
\begin{equation*}\label{}
 [F]=[F_{1,1} \ F_{2,1} \ F_{1,2} \ F_{2,2} ], \quad
 [A]=[A^1_{1,1} \ A^1_{2,1} \ A^2_{1,2} \ A^2_{1,1} \ A^1_{1,2} \ A^1_{2,2} \ A^2_{2,2} \ A^2_{2,1}],
 \end{equation*}
 \begin{equation*}\label{}
 [x]=[x^{1,1} \ x^{2,1} \ x^{1,2} \ x^{2,2} ], \quad
 [e]=[e_1^{1,1} \ e_1^{2,1} \ e_2^{1,2} \ e_2^{1,1} \ e_1^{1,2} \ e_1^{2,2} \ e_2^{2,2} \ e_2^{2,1}],
 \end{equation*}
 \begin{equation*}\label{}
  [V]=[V^{1,1} \ V^{2,1} \ V^{1,2} \ V^{2,2}].
 \end{equation*}
 Denote by $[\cdot]^T$ the corresponding column vector. In this notation, the 1-form $A$ can be rewritten as
 \begin{equation*}\label{}
 A=[e][A]^T=[A][e]^T.
 \end{equation*}
 By trivial computation, one finds that
 \begin{equation*}\label{}
 [\ast F]=[F]S, \quad \ast F=[F]S[x]^T
   \end{equation*}
 and
 \begin{equation*}\label{}
  d^c\ast F=[e]DS[F]^T, \quad
  \quad \ast F\cup A=[F]SD_2I_A[e]^T, \quad A \cup\ast F=[e]I_AD_1S[F]^T,
 \end{equation*}
 where
\begin{equation*}
D=
\begin{bmatrix}
-1 & 1 & 0 & 0 \\
1 & -1 & 0 & 0 \\
1 & 0 & -1 & 0 \\
-1 & 0 & 1 & 0 \\
0 & 0 & -1 & 1 \\
0 & 0 & 1 & -1 \\
0 & 1 & 0 & -1 \\
0 & -1 & 0 & 1
\end{bmatrix}, \quad
S=
\begin{bmatrix}
0 & 0 & 0 & 1 \\
0 & 0 & 1 & 0 \\
0 & 1 & 0 & 0 \\
1 & 0 & 0 & 0
\end{bmatrix},
\end{equation*}
\begin{equation*}
D_1=
\begin{bmatrix}
0 & 1 & 0 & 0 \\
1 & 0 & 0 & 0 \\
1 & 0 & 0 & 0 \\
0 & 0 & 1 & 0 \\
0 & 0 & 0 & 1 \\
0 & 0 & 1 & 0 \\
0 & 1 & 0 & 0 \\
0 & 0 & 0 & 1
\end{bmatrix}, \quad
D_2=
\begin{bmatrix}
1 & 0 & 0 & 1 & 0 & 0 & 0 & 0 \\
0 & 1 & 0 & 0 & 0 & 0 & 0 & 1\\
0 & 0 & 1 & 0 & 1 & 0 & 0 & 0\\
0 & 0 & 0 & 0 & 0 & 1 & 1 & 0
\end{bmatrix},
\end{equation*}
\begin{equation*}
I_A=
\begin{bmatrix}
A^1_{1,1} & 0 & 0 & 0 & 0 & 0 & 0 & 0\\
0 & A^1_{2,1} & 0 & 0 & 0 & 0 & 0 & 0\\
0 & 0 & A^2_{1,2} & 0 & 0 & 0 & 0 & 0\\
0 & 0 & 0 & A^2_{1,1} & 0 & 0 & 0 & 0 \\
0 & 0 & 0 & 0 & A^1_{1,2} & 0 & 0 & 0\\
0 & 0 & 0 & 0 & 0 & A^1_{2,2} & 0 & 0 \\
0 & 0 & 0 & 0 & 0 & 0 & A^2_{2,2} & 0 \\
0 & 0 & 0 & 0& 0 & 0 & 0 & A^2_{2,1}
\end{bmatrix}.
\end{equation*}
Thus, the discrete Yang-Mills equation \eqref{3.5} can be written in  matrix form as
\begin{equation*}\label{}
 [e]DS[F]^T+[e]I_AD_1S[F]^T-[F]SD_2I_A[e]^T=0
 \end{equation*}
or, equivalently,
 \begin{equation*}\label{}
 DS[F]^T+I_AD_1S[F]^T-([F]SD_2I_A)^T=[0]^T.
 \end{equation*}
Similarly, Equation \eqref{3.14} or the system \eqref{4.5} can be represented as follows
 \begin{equation*}\label{}
 DS[F]^T+I_AD_1S[F]^T+([F]SD_2I_{\ast\ast A})^T=[0]^T,
 \end{equation*}
where
\begin{equation*}\label{}
 [\ast\ast A]=[-A^1_{2,2} \ -A^1_{1,2} \ -A^2_{2,1} \ -A^2_{2,2} \ -A^1_{2,1} \ -A^1_{1,1} \ -A^2_{1,1} \ -A^2_{1,2}].
 \end{equation*}

 \section{Conclusions}
This study introduces a discrete model of the two-dimensional Yang-Mills equations that preserves geometric structure. The properties of operators generated by the discrete Yang-Mills equations have been thoroughly investigated. Special attention is given to discrete models on the combinatorial two-dimensional torus. In this context, a system of difference equations equivalent to the discrete Yang-Mills equations was constructed, and the matrix form of these equations was also presented. The author encourages readers to apply the proposed discretization scheme to develop numerical implementation techniques.

 \end{document}